\newtheorem{lemma}{Lemma}
\newtheorem{theorem}{Theorem}
\newtheorem{problem}{Problem}
\newtheorem{remark}{Remark}
\newtheorem{definition}{Definition}
\newtheorem{assumption}{Assumption}
\begin{document}

\title{Range-Only Dynamic Output Feedback Controller for Safe and Secure Target Circumnavigation}

\author{Anand Singh and Anoop Jain,~\IEEEmembership{Senior Member,~IEEE}
\thanks{The authors are with the Department of Electrical Engineering, Indian Institute of Technology Jodhpur, 342030, India (e-mail: {m23eec001@iitj.ac.in; anoopj@iitj.ac.in}).}}

\maketitle


\begin{abstract}
The safety and security of robotic systems are paramount when navigating around a hostile target. This paper addresses the problem of circumnavigating an unknown target by a unicycle robot while ensuring it maintains a desired safe distance and remains within the sensing region around the target throughout its motion. The proposed control design methodology is based on the construction of a joint Lyapunov function that incorporates: (i) a quadratic potential function characterizing the desired target-circumnavigation objective, and (ii) a barrier Lyapunov function-based potential term to enforce safety and sensing constraints on the robot's motion. A notable feature of the proposed control design is its reliance exclusively on \emph{local} range measurements between the robot and the target, realized using a dynamic output feedback controller that treats the range as the only observable output for feedback. Using the Lyapunov stability theory, we show that the desired equilibrium of the closed-loop system is asymptotically stable, and the prescribed safety and security constraints are met under the proposed controllers. We also obtain restrictive bounds on the post-design signals and provide both simulation and experimental results to validate the theoretical contributions.   
\end{abstract}

\begin{IEEEkeywords} 
Nonholonomic robot, dynamic output feedback control, barrier Lyapunov function, target circumnavigation.
\end{IEEEkeywords}


\section{Introduction}
With the increasing sophistication of modern targets, the classical problem of target circumnavigation has garnered significant attention in the control and robotics communities in recent years \cite{sharma2010cooperative,cao2021safe,li2019safe}. One of the critical aspects in circumnavigating such hostile targets using an autonomous robotic system, such as unmanned aerial vehicle (UAV) or unmanned ground vehicle (UGV), is to restrict its trajectories within a predefined workspace (or region) such that the robot maintains a safe distance from the target while avoiding excessive separation that could disrupt its connection to the target due to sensing limitations. For instance, when tracking an explosive target, a UAV must maintain a safe distance to avoid triggering the explosion while staying within the sensing zone for effective monitoring \cite{zengin2007real}. Similarly, in scenarios involving hazardous chemical spills, radioactive zones, or critical search areas, a UGV must keep a safe distance to avoid contamination while remaining within the sensing range to maintain contact with the target \cite{brinon2015distributed,demetriou2020navigating}. Motivated by such applications, we address the problem of safe and secure circumnavigation of an unknown stationary target by a unicycle robot, ensuring its trajectories remain bounded within an annular region around the hostile target. Please note that our use of the terms ``safe" and ``secure" differs from their usage in the cyber-security literature \cite{teixeira2015secure}, which represents a separate research direction beyond the scope of this paper.

During risky target-circumnavigation missions, robots often operate in harsh environments where GPS signals are unreliable or entirely unavailable \cite{qin2019autonomous,cao2015uav}. In such situations, the robot must rely on local information to achieve target-circumnavigation while adhering to safety and sensing constraints. In this direction, two types of approaches are primarily studied in the literature: (i) the bearing-based approach \cite{cao2021safe,li2019safe,deghat2014localization,zheng2015enclosing,dou2020target}, and (ii) the range-only-based approach \cite{cao2015uav,wang2024target,dong2019circumnavigation,matveev2016range,matveev2017range,shames2011circumnavigation}. The bearing-based approach employs passive sensors, such as cameras or directional antennas, to determine the line-of-sight angle to the target. However, this method does not provide direct measurements of the target’s distance, making it impossible to accurately localize the target with a single measurement \cite{liu2024distance}. Further, the bearing-based approaches are vulnerable to occlusion and noise in complex environments \cite{shames2012analysis}. On the other hand, the range-only-based approach uses active sensors, such as LiDAR, sonar, or ultrasonic devices, to directly measure the relative distance to the target, and are relatively more precise than bearing-based sensors in GPS-denied areas \cite{djugash2006range,wang2021mobile}. Inspired by these practical challenges, this work focuses on solving the target-circumnavigation problem using range-only measurements.

In the literature, an alternate term ``range-based" measurements is sometimes used to describe approaches that rely on both range and range-rate information \cite{hashemi2015unmanned,milutinovic2017coordinate,dong2020target}. From an engineering perspective, range-rate information can be derived from range data using numerical approximation techniques, such as first-order differentiation. However, such estimations introduce noise, cause undesired oscillations, and may even destabilize the system \cite{wang2021mobile,dong2021coordinate}. Although \cite{matveev2017range} proposed a derivative-free control law, it utilized heading angle as the control input instead of the standard angular velocity. Other works \cite{cao2015uav,dong2021coordinate} employed second-order sliding mode observers to estimate range-rate, but these methods result in discontinuous controllers with high computational complexity. Some studies \cite{cao2013circumnavigation,jain2022encirclement} derived range-rate information indirectly by exploiting geometrical relationships between range and bearing angles. However, these approaches suffer from reduced accuracy due to their reliance on bearing sensors, as mentioned earlier. Unlike these methods, the proposed work leverages the concept of a dynamic output feedback controller to design a purely range-only-based controller to solve the addressed problem in this paper. 

It is noteworthy that none of the aforementioned works address the safety and security aspects of the robot deployed for target circumnavigation, which is the main theme of the proposed work. While the problem of safe target circumnavigation has been explored in recent literature, existing control designs either rely on bearing measurements and are limited only to the single integrator robot models \cite{cao2021safe,li2019safe} or require both range and range-rate measurements \cite{bhati2025safe}. To the best of the authors' knowledge, the problem of target-circumnavigation by a nonholonomic unicycle mobile robot with safety and security considerations has not been addressed using range-only measurements. This paper contributes in this direction by leveraging the concept of logarithmic asymmetric barrier Lyapunov functions (ABLFs) \cite{tee2009barrier}. The main contributions and features of this work are summarized as follows:
\begin{itemize}[leftmargin=*]
	\item \emph{Summary of Contributions:} By introducing a coordinate transformation for the robot-target engagement model, we first propose the controller which uses both range and range-rate information for motivation, followed by the controller which solely relies on the range-only measurements by using the dynamic output feedback design for an unknown target. The control design is based on the construction of a joint Lyapunov function comprising a quadratic potential function $V_e$, characterizing the desired target-circumnavigation objective, and an ABLF-based potential function $V_r$, which enforces safety and security constraints on the robot's motion. Specifically, $V_e$ is derived from the transformed relative positional error components between the robot and the target in the Cartesian plane, while $V_r$ is constructed from the robot's positional error relative to the desired circumnavigation radius in polar coordinates. Using the tools from the Lyapunov stability theory, the asymptotic stability of the closed-loop system is rigorously proven, and the robot's trajectories are shown to satisfy the required safety and security constraints under the proposed controllers. Furthermore, restrictive bounds on post-design signals are derived in terms of these potential functions, and the boundedness of the controllers is established. Along with simulations, experimental results using the Khepera IV differential drive ground robot, are provided to illustrate the theoretical findings. 
	\item \emph{Main Features:} The key features of our proposal are threefold: (a) First, our design approach is free from complex bearing angle analysis, unlike many other works in this direction \cite{cao2015uav,wang2021mobile,cao2013circumnavigation}, (ii) Second, no constraints are imposed on the initial heading angle of the navigating robot for assuring its safety and security, unlike \cite{jain2019trajectory,hegde2023synchronization} which requires both heading angle and position constraints, (iii) Third, the proposed controller allows a large class of (the so-called) \emph{design functions}, including unbounded ones, leading to faster convergence as compared to \cite{wang2024target} (which also does not address the safety and security concerns). Note that, despite allowing unbounded design functions, the proposed controllers in our paper remain finite due to the inherent safety and security constraints embedded in the design. 
\end{itemize}        

\paragraph*{Preliminaries}
We denote the set of real numbers by $\mathbb{R}$, the set of non-negative real numbers by $\mathbb{R}_+$, and the $n$-dimensional real vector space by $\mathbb{R}^n$. The Euclidean norm of $q \in \mathbb{R}^n$ is represented by $\|q\|$, and the superscript $q^\top$ denotes its transpose. The vector $\pmb{0}_n \in \mathbb{R}^n$ represents an $n$-dimensional vector with all components equal to zero. For a differential map $f: \mathbb{D} \to \mathbb{R}$, $\mathbb{D} \subset \mathbb{R}^n$, $\nabla_q f= [\partial f/\partial q_1, \cdots, \partial f/\partial q_n]^\top$ is the gradient with respect to $q = [q_1, \cdots, q_n]^\top \in \mathbb{R}^n$. A continuous, strictly increasing function $\alpha: \mathbb{R}_+ \to \mathbb{R}_+$ is said to belong to class $\mathcal{K}_{\infty}$ if $\alpha(0) = 0$ and $\alpha(\bullet) \to \infty$ as $\bullet \to \infty$ \cite[pg. 144]{khalil2002control}. Let $\Omega: \mathbb{R} \to \mathbb{R}$ be a class of smooth functions (now onward referred to as \emph{design functions}) satisfying the properties: (P1) $\Omega(0) = 0$ and $s\Omega(s) > 0 $ for $s \neq 0$; (P2) $\alpha_1(|s|) \leq \int_{0}^{s} \Omega(\tau) d\tau \leq \alpha_2(|s|)$ for some class $\mathcal{K}_{\infty}$ functions $\alpha_1$ and $\alpha_2$. Note that it follows from (P1) that $\int_{0}^{s} \Omega(\tau) d\tau > 0$ for all $s \neq 0$ and $\int_{0}^{s} \Omega(\tau) d\tau = 0$ only if $s = 0$. Examples of such functions include $s^n$ where $n > 0$ is an odd integer, $s/\sqrt{1 + s^2}$, $\tanh(s)$, $\arctan(s)$, etc. For clarity, the time argument $t$ is often omitted when evident from the context. Next, we recall barrier Lyapunov function from \cite{tee2009barrier} and highlight an important convergence result that will be helpful in the subsequent analysis. 

\begin{definition}[Barrier Lyapunov Function (BLF) \cite{tee2009barrier}]
A Barrier Lyapunov Function (BLF) is a scalar function $V(x)$ defined for the state vector $x \in \mathcal{D}$ of the system $\dot{x} = f(x)$ on an open region $\mathcal{D}$, containing the origin, that is continuous, positive definite, has continuous first-order partial derivatives everywhere in $\mathcal{D}$, has the property that $V(x) \to \infty$ as $x$ approaches the boundary of $\mathcal{D}$, and satisfies $V(x(t)) \leq \varphi$ for all $t \geq 0$, along the solution of $\dot{x} = f(x) $ for $x(0) \in \mathcal{D}$ and some positive constant $\varphi$.
\end{definition}

\begin{lemma}[Convergence under BLF \cite{tee2009barrier}]\label{blf_lemma}
For any positive constants $k_a $ and $ k_b$, let $\mathcal{Z} \triangleq \{\xi \in \mathbb{R}: -k_a < \xi < k_b\} \subset \mathbb{R}$ and $\mathcal{N} \triangleq \mathbb{R}^\ell \times \mathcal{Z} \subset \mathbb{R}^{\ell+1}$ be the open sets. Consider the system $\dot{\eta} = h(t, \eta) $, where $\eta \triangleq [\xi, w]^\top \in \mathcal{N}$, and $ h: \mathbb{R}_+ \times \mathcal{N} \to \mathbb{R}^{\ell+1}$ is piecewise continuous in $t$ and locally Lipschitz in $\eta$, uniformly in $t$, on $\mathbb{R}_+ \times \mathcal{N}$. Suppose there exist functions $U : \mathcal{Z} \to \mathbb{R}_+$ and $W : \mathbb{R}^\ell \to \mathbb{R}_+ $, continuously differentiable and positive definite in their respective domains, such that $U(\xi) \to \infty $ as $\xi \to -k_a$ or $\xi \to k_b$ and $\gamma_1(\|w\|) \leq W(w) \leq \gamma_2(\|w\|)$ for some class $\mathcal{K}_\infty$ functions $\gamma_1$ and $\gamma_2$. Let $V(\eta) \triangleq U(\xi) + W(w)$ and assume that $\xi(0) \in \mathcal{Z}$. If $\dot{V} = (\nabla V)^T h \leq 0 $ holds in the set $\xi \in \mathcal{Z}$, then $\xi(t) \in \mathcal{Z}$ for all $ t \in [0, \infty)$.
\end{lemma}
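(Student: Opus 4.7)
The plan is to argue by contradiction, exploiting the fact that $\dot V\le 0$ forces $U$ to stay bounded along the trajectory, while the barrier property of $U$ makes the boundary of $\mathcal Z$ effectively unreachable. Since $h$ is locally Lipschitz in $\eta$ on $\mathbb{R}_+\times\mathcal N$, local existence and uniqueness of a solution $\eta(t)=[\xi(t),w(t)]^\top$ starting from any $\eta(0)\in\mathcal N$ is guaranteed (we have $\xi(0)\in\mathcal Z$ by hypothesis, and $w(0)\in\mathbb{R}^\ell$ is arbitrary). Let $[0,T_{\max})$ denote the maximal interval of existence on which $\eta(t)\in\mathcal N$, i.e., on which $\xi(t)\in\mathcal Z$. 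The goal reduces to showing $T_{\max}=\infty$.

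To this end, I would define $T^{*}\triangleq\sup\{T\in[0,T_{\max}):\xi(t)\in\mathcal Z\text{ for all }t\in[0,T]\}$ and suppose for contradiction that $T^{*}<\infty$. By continuity of $\xi$ and the definition of $T^{*}$, one must have $\xi(T^{*})\in\partial\mathcal Z$, i.e., $\xi(T^{*})=-k_a$ or $\xi(T^{*})=k_b$. The barrier property $U(\xi)\to\infty$ as $\xi\to -k_a$ or $\xi\to k_b$ then forces $U(\xi(t))\to\infty$ as $t\to T^{*-}$.

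On the other hand, for every $t\in[0,T^{*})$ the trajectory lies in the set where $\dot V\le 0$ is assumed to hold, so integrating yields $V(\eta(t))\le V(\eta(0))$. Using the decomposition $V=U+W$ together with $W(w)\ge \gamma_1(\|w\|)\ge 0$, one obtains
\begin{equation*}
U(\xi(t))\;\le\;U(\xi(t))+W(w(t))\;=\;V(\eta(t))\;\le\;V(\eta(0))\;<\;\infty,
\end{equation*}
for all $t\in[0,T^{*})$. Letting $t\to T^{*-}$ contradicts the divergence $U(\xi(t))\to\infty$ established above. Hence $T^{*}=T_{\max}$, and $\xi(t)\in\mathcal Z$ throughout the interval of existence.

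The subtlety I expect to be the main obstacle is ruling out a finite escape time in the $w$ coordinate that could terminate the trajectory before $\xi$ has a chance to approach $\partial\mathcal Z$. This is handled by the same energy bound: since $\gamma_1\in\mathcal K_\infty$ is invertible and $W(w(t))\le V(\eta(0))$, one concludes $\|w(t)\|\le \gamma_1^{-1}(V(\eta(0)))$ for all $t\in[0,T^{*})$, so $w$ remains in a fixed compact set. Combined with $\xi(t)\in\mathcal Z$, the state stays in a compact subset of $\mathcal N$, allowing the local solution to be extended. Thus $T_{\max}=\infty$, which completes the argument and gives $\xi(t)\in\mathcal Z$ for all $t\in[0,\infty)$.
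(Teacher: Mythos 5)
The paper states this lemma without proof, importing it directly from the cited reference \cite{tee2009barrier}, so there is no in-paper argument to compare against; your proof is correct and follows the same standard route as the original source (boundedness of $U$ along trajectories from $\dot V\le 0$ versus the barrier blow-up, plus the compactness bound on $w$ via $\gamma_1^{-1}$ to rule out finite escape time). The only cosmetic imprecision is asserting that $\xi(T^{*})\in\partial\mathcal Z$ exists as a limit, which your closing compactness argument (trajectory confined to a compact subset of $\mathcal N$, hence extendable) already makes unnecessary.
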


 \begin{figure}[t]
	\centering{
		\includegraphics[width=7.0cm]{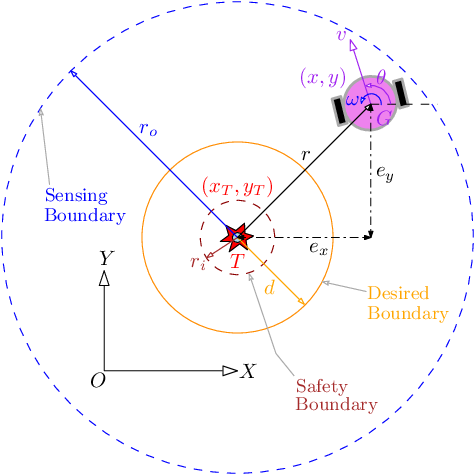}
		\caption{Safe and secure target circumnavigation by a unicycle robot: An illustration of the problem.}
		\label{fig_problem_description}}
	 \vspace*{-15pt}
\end{figure}


\section{System Model, Problem Description and Solution Methodology}

\subsection{System Model and Problem Description}
Consider an under-actuated mobile robot moving in a planar space $\mathbb{R}^2$ with a nonzero speed $v(t)$, governed by the unicycle kinematic model: 
\begin{equation}\label{system_model}
\dot{x}(t) = v(t)\cos\theta(t), \quad \dot{y}(t) = v(t)\sin \theta(t), \quad \dot{\theta}(t) = \omega(t),
\end{equation}
where $[x(t), y(t)]^\top \in \mathbb{R}^2$ is the robot's position and $\theta(t) \in [0, 2\pi)$ is its heading angle in the global $(X, Y)-$coordinate frame, and $\omega(t) \in \mathbb{R}$ is the turn-rate control input. The objective is to design $\omega(t)$ such that the robot \eqref{system_model} circumnavigates a stationary target $T$ located at position $[x_T, y_T]^\top \in \mathbb{R}^2$ (as shown in Fig.~\ref{fig_problem_description}). We assume that the robot does not know the position of the target $T$, instead it can measure the line-of-sight (LoS) distance (or range) to it, denoted as: $r(t) = \sqrt{(x(t) - x_T)^2 + (y(t) - y_T)^2}$. 

As illustrated in Fig.\ref{fig_problem_description}, our objective in this paper is to allow the robot to circumnavigate the target $T$ at a constant radial distance $d$, while assuring that it $(a)$ maintains a safe distance $r_{\text{safe}} \coloneqq r_i > 0$ from the target, and $(b)$ remains within a desired sensing region defined by the outer radial distance $r_{\text{sensing}} \coloneqq r_o > 0$, ensuring the target remains within its sensor's coverage. Collectively, it is desired that $r(t) \to d$ as $t \to \infty$, and $r_i < r(t) < r_o$ for all $t \geq 0$ where $r_o > r_i$. According to Fig.~\ref{fig_problem_description}, it is clear that the three radii satisfy the condition $r_i < d < r_o$. It is imperative in the above discussion that the robot must start within the region enclosed by the two radii, that is, $r_i < r(0) < r_o$, which is formally stated in the following assumption:

\begin{assumption}\label{assumption}
For the given inner and outer radii $r_i$ and $r_o$, respectively, let $\mathcal{Z}_r \triangleq \{r \in \mathbb{R}_+ \mid r_i < r < r_o\}$ be an open set. The robot \eqref{system_model} begins its motion such that $r(0) \in \mathcal{Z}_r$. 
\end{assumption}

It is worth noticing that Assumption~\ref{assumption} places no restrictions on the initial heading angle $\theta(0)$ of the robot \eqref{system_model} and indicates only a natural requirement on its initial positions as in Fig.~\ref{fig_problem_description}. We now formally state the problem addressed in this paper: 

\begin{problem}\label{problem}
Consider the robot-target engagement as shown in Fig.~\ref{fig_problem_description} and assume that the robot \eqref{system_model} begins its motion such that Assumption~\ref{assumption} holds. Design the control law $\omega(t)$ in \eqref{system_model} such that $r(t) \to d$ as $t \to \infty$ and $r_i < r(t) < r_o$ for all $t \geq 0$.
\end{problem}

\subsection{Solution Methodology}
Let $e_x = x - x_T$ and $e_y = y - y_T$ be the positional errors between the robot's geometric center $G$ and target $T$, as shown in Fig.~\ref{fig_problem_description}. Using \eqref{system_model}, the error dynamics are obtained as $\dot{e}_x = \dot{x} = v\cos\theta$ and $\dot{e}_y = \dot{y} = v\sin\theta$, since $\dot{x}_T = \dot{y}_T = 0$. Notice that the errors $[e_x, e_y]^\top$ are uncontrollable as the control $\omega$ does not appear in either of the error dynamics. To address this challenge, we solve Problem~\ref{problem} by proposing the following coordinate transformation, which maps the error vector $[e_x, e_y]^\top$ into a new coordinate frame $[\bar{e}_x, \bar{e}_y]^\top$, as: 
\begin{equation} \label{transformation}
	\begin{bmatrix}
		\bar{e}_x\\
		\bar{e}_y
	\end{bmatrix}
	= \begin{bmatrix}
		\cos\theta & \sin\theta\\
		-\sin\theta & \cos\theta
	\end{bmatrix}
	\begin{bmatrix}
		{e}_x\\
		{e}_y
	\end{bmatrix}
	+ d \begin{bmatrix}
		0 \\
		1
	\end{bmatrix}.
\end{equation}
From \eqref{transformation}, one can write error dynamics in the transformed coordinates as $\dot{\bar{e}}_x = \dot{e}_x\cos\theta + \dot{e}_y\sin\theta + (e_y\cos\theta - e_x\sin\theta)\dot{\theta}$ and $\dot{\bar{e}}_y = -\dot{e}_x\sin\theta + \dot{e}_y\cos\theta - (e_x\cos\theta + e_y\sin\theta)\dot{\theta}$. Using \eqref{system_model} and \eqref{transformation}, these reduce to
\begin{subequations}\label{error_dynamics_transformed}
\begin{align}
\dot{\bar{e}}_x &= v + (\bar{e}_y - d) \omega\\
\dot{\bar{e}}_y &= -\bar{e}_x \omega.
\end{align}
\end{subequations}
From \eqref{error_dynamics_transformed}, since it is clear that $[\bar{e}_x, \bar{e}_y]^\top$ is controllable, our solution approach focuses on designing the control $\omega(t)$ in the new coordinates to solve Problem~\ref{problem}. Substituting $[e_x, e_y]^\top$ in terms of $[\bar{e}_x, \bar{e}_y]^\top$ from \eqref{transformation}, one can observe that 
\begin{equation}\label{r_square}
	r^2 = e_x^2 + e_y^2 = \bar{e}_x^2 + (\bar{e}_y - d)^2,
\end{equation}
and the time-derivative of \eqref{r_square}, using \eqref{transformation} and \eqref{error_dynamics_transformed}, is given by:
\begin{equation}\label{range_dynamics}
	\frac{d}{dt}\left(\frac{1}{2}r^2\right) = r\dot{r} = (\bar{e}_x\dot{\bar{e}}_x + (\bar{e}_y - d)\dot{\bar{e}}_y) = v\bar{e}_x,
\end{equation}
which will be useful in the subsequent analysis. From \eqref{r_square}, it readily follows that if $[\bar{e}_x, \bar{e}_y]^\top \to \pmb{0}_2$, $r \to d$. Note that if the direction of the unit vector $[0, 1]^\top \in \mathbb{R}^2$ is reversed in the transformation \eqref{transformation}, one gets a similar formulation as in \eqref{error_dynamics_transformed} and \eqref{r_square} with a change in the sign of $d$, which is excluded in our discussion for simplicity. We further introduce the radial error 
\begin{equation}\label{radial_error}
e_r(t) \triangleq r(t) - d,	
\end{equation}
representing the deviation between the robot's position and the radius of the desired circumnavigation path in the polar coordinate. By appropriately constraining the error \eqref{radial_error}, we aim to satisfy the desired safety and security constraints on the robot's motion. Moreover, $e_r \to 0 \implies r \to d$, as desired. Therefore, the equivalent control objectives to Problem~\ref{problem} is to design controller $\omega(t)$ in \eqref{system_model} such that the transformed errors $[\bar{e}_x, \bar{e}_y]^\top \to \pmb{0}_2$ as $t \to \infty$ and $r_i - d < e_r(t) < r_o - d$ (alternatively, $r_i < r(t) < r_o$) for all $t \geq 0$. 


\section{Lyapunov Functions and Control Design}\label{lyapunov}
The proposed control design methodology relies on coalition of two Lyapunov functions $-$ (i) one is the quadratic Lyapunov function for minimizing the transformed errors $\bar{e}_x$ and $\bar{e}_y$ to the origin, and (ii) the other is an ABLF associated with the radial error \eqref{radial_error} that assures confinement of the radial distance $r(t)$ within the pre-specified safety and sensing bounds. Below, we describe these functions separately, followed by the control design.  

\subsection{Constructions of Lyapunov Functions}   
To minimize errors $\bar{e}_x$ and $\bar{e}_y$ to the origin, we consider the following quadratic Lyapunov function: 
\begin{equation}\label{Ve}
V_e(\bar{e}_x, \bar{e}_y) = \frac{1}{2}(\bar{e}_x^2 + \bar{e}_y^2),
\end{equation}
which is positive-definite and attains its minimum value of zero only if $[\bar{e}_x, \bar{e}_y]^\top \to \pmb{0}_2$. The time derivative of \eqref{Ve}, along the error dynamics \eqref{error_dynamics_transformed}, is given by
\begin{equation}\label{Ve_derivative}
\dot{V}_e = \bar{e}_x\dot{\bar{e}}_x + \bar{e}_y\dot{\bar{e}}_y = (v - d \omega)\bar{e}_x.
\end{equation}

On the other hand, to constrain the range $r(t)$ within the predefined barriers, we propose the following ABLF \cite{tee2009barrier}: 
\begin{equation}\label{Vr}
V_r(e_r) = \frac{\sigma(e_r)}{2} \ln \left[\frac{\delta_b^2}{\delta_b^2 - e_r^2} \right] + \frac{1 - \sigma(e_r)}{2} \ln \left[\frac{\delta_a^2}{\delta_a^2 - e_r^2}\right],
\end{equation}
where ``$\ln"$ denotes the natural logarithm function and $\sigma(e_r)$ is a logic parameter, defined as:
\begin{equation}\label{logic_parameter}
\sigma(e_r) \triangleq
\begin{cases}
	1, & \text{if } e_r > 0, \\
	0, & \text{if } e_r \leq 0 
\end{cases}.	
\end{equation}
Further, $\delta_a$ and $\delta_b$ are positive constants and rely on the radii $r_i$, $r_o$ and $d$ according to the relations:
\begin{equation}\label{barriers}
\delta_a = d - r_i, \qquad \delta_b = r_o - d. 
\end{equation}
Note that the potential \eqref{Vr} is positive-definite and $\mathcal{C}^1$ in the domain $e_r \in (-\delta_a, \delta_b)$ (please refer to \cite[Lemma~2]{tee2009barrier} for a proof) and $V_r(e_r) = 0$ only if $e_r = 0$. Thus, $V_r(e_r)$ is a valid Lyapunov function candidate in the domain $e_r \in (-\delta_a, \delta_b)$. The time derivative of \eqref{Vr}, along the dynamics of \eqref{radial_error}, is obtained as (note that $\sigma$ is a logic parameter as in \eqref{logic_parameter}): 
\begin{equation}\label{Vr_derivative_mid}
\dot{V}_r = e_r \dot{e}_r \left[\frac{\sigma(e_r)} {\delta_b^2 - e_r^2} + \frac{1 - \sigma(e_r)}{\delta_a^2 - e_r^2} \right] = (r-d)\dot{r} \eta(e_r),
\end{equation}
where we denote for simplicity
\begin{equation}\label{eta}
\eta(e_r) \triangleq \frac{\sigma(e_r)}{\delta_b^2 - e_r^2} + \frac{1 - \sigma(e_r)}{\delta_a^2 - e_r^2}.
\end{equation}
Next, substituting $\dot{r}$ from \eqref{range_dynamics} into \eqref{Vr_derivative_mid}, we get
\begin{equation}\label{Vr_derivative}
	\dot{V}_r = \left(\frac{r - d}{r}\right) v\bar{e}_x \eta(e_r). 
\end{equation}
Note that $e_r$ is essentially a function of range $r(t)$ (see \eqref{radial_error}), and hence, $\eta(e_r)$ in \eqref{eta}. Following this, we will use $\eta(e_r) \triangleq \eta(r)$ interchangeably in the following analysis for simplicity. Next, we discuss the control design.   

\subsection{Control Design}
In this subsection, we propose the control $\omega(t)$ relying on the potentials \eqref{Ve} and \eqref{Vr}, and solve Problem~\ref{problem}. We first describe an approach for motivation where both range and range-rate information are utilized in the control law. We then propose an extension to this control law which uses only the range information and is derived from the dynamic output feedback control approach. 

\subsubsection{Availability of range and range-rate measurements} 
When both range and range rate are available, we have the following result:

\begin{theorem}\label{thm_control_design_1}
	Consider the robot-target engagement shown in Fig.~\ref{fig_problem_description} and assume that Assumption~\ref{assumption} holds. Let the robot \eqref{system_model} be governed by the control law
	\begin{equation}\label{control_1}
		\omega(t) = \frac{v}{d} + v k_1\Omega(r\dot{r}) + v k_2 \left(\frac{r - d}{r}\right) \eta(r),
	\end{equation}
	where $k_1 > 0$ and $k_2 > 0$ are the control gains, and $\Omega(\bullet)$ is the design function (introduced in the preliminaries). Then, $r_i < r(t) < r_o$ for all $t \geq 0$, and $\bar{e}_x(t) = \bar{e}_y(t) \to 0$ and $r(t) \to d$ as $t \to \infty$ in the set $\mathcal{Z}_r$.  
\end{theorem}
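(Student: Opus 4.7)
The plan is to coalesce the two constructed potentials into the joint Lyapunov candidate $V = V_e + d k_2 V_r$ and to check that the gradient-based terms in the controller \eqref{control_1} cancel the cross coupling between $\dot V_e$ and $\dot V_r$, leaving only a damping piece generated by the design function $\Omega$.

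First I would substitute \eqref{control_1} into \eqref{Ve_derivative}. Since $v - d\omega = -dvk_1\Omega(r\dot r) - dvk_2((r-d)/r)\eta(r)$ and $r\dot r = v\bar{e}_x$ from \eqref{range_dynamics}, this yields $\dot V_e = -dvk_1\bar{e}_x\Omega(v\bar{e}_x) - dvk_2\bar{e}_x((r-d)/r)\eta(r)$. Adding $dk_2$ times \eqref{Vr_derivative} cancels the second term and produces
\begin{equation*}
\dot V \;=\; -dvk_1\,\bar{e}_x\,\Omega(v\bar{e}_x)\;\le\;0,
\end{equation*}
with the sign coming from property (P1) of $\Omega$ (since $v>0$, the sign of $v\bar{e}_x$ matches that of $\bar{e}_x$). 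Assumption~\ref{assumption} makes $V(0)<\infty$, so Lemma~\ref{blf_lemma}, applied with $\xi = e_r$, $w = [\bar{e}_x,\bar{e}_y]^\top$, $U = dk_2 V_r$ and $W = V_e$, immediately delivers $e_r(t)\in(-\delta_a,\delta_b)$ for all $t\ge0$, i.e.\ the safety and sensing constraint $r_i < r(t) < r_o$. Monotonicity of $V$ additionally keeps $\bar{e}_x,\bar{e}_y$ bounded and $r$ uniformly away from the barriers, so the sublevel set $\{V\le V(0)\}\cap\mathcal{Z}_r$ is positively invariant and compact in the $(\bar{e}_x,\bar{e}_y)$ state-space.

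For asymptotic convergence I would apply LaSalle's invariance principle on that set. On $\{\dot V=0\}=\{\bar{e}_x\equiv0\}$, \eqref{error_dynamics_transformed} forces $\bar{e}_y$ to be constant and imposes $v+(\bar{e}_y-d)\omega=0$, while \eqref{r_square} gives $r=|d-\bar{e}_y|$ and \eqref{control_1} collapses (using $\Omega(0)=0$) to $\omega = v/d + vk_2((r-d)/r)\eta(r)$. On the geometrically relevant branch $d-\bar{e}_y=r$ these relations combine into $(r-d)\bigl(1+dk_2\eta(r)\bigr)=0$; since $\eta(r)>0$, this forces $r=d$ and hence $\bar{e}_y=0$, which identifies the largest invariant set with the desired equilibrium and gives $r(t)\to d$.

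The main obstacle I anticipate is precisely this LaSalle analysis, because the relation $r=|d-\bar{e}_y|$ admits a second branch $d-\bar{e}_y=-r$ on which the same elimination produces an algebraic equation that can admit a spurious root $r^\ast\in(r_i,d)$ corresponding to a clockwise circular orbit at radius $r^\ast$. Excluding this branch requires noting that any such configuration has $V_e\ge(d+r^\ast)^2/2$ and so lies outside a sufficiently small sublevel set of $V$, which is where the qualifier ``in the set $\mathcal{Z}_r$'' in the statement effectively localizes the conclusion.
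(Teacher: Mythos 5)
Your proposal follows essentially the same route as the paper's proof: the same joint Lyapunov function $V_1=V_e+dk_2V_r$, the same cancellation yielding $\dot V_1=-dk_1(v\bar e_x)\Omega(v\bar e_x)\le 0$, the same use of Lemma~\ref{blf_lemma} to get $r_i<r(t)<r_o$, and the same LaSalle endgame culminating in $(r_c-d)(1+dk_2\eta(r_c))=0$. One minor remark: the sign of $\dot V_1$ follows from property (P1) applied to $s=v\bar e_x$ regardless of the sign of $v$, so you do not need to assume $v>0$ (the paper only assumes $v\neq 0$).

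The one place you genuinely depart from the paper is the branch analysis at the end, and there you have put your finger on a real soft spot. The paper passes from $v+(\bar e_y-d)\omega=0$ and $r_c^2=(\bar e_y-d)^2$ directly to ``$\omega=v/r_c$,'' i.e., it silently selects the branch $d-\bar e_y=r_c$. On the other branch $\omega=-v/r_c$, and eliminating $\omega$ against the control law gives $dk_2(d-r_c)\eta(r_c)=d+r_c$; since the left-hand side increases from $0$ to $+\infty$ as $r_c$ decreases from $d$ to $r_i$ while the right-hand side stays bounded, this always has a root $r^\ast\in(r_i,d)$, i.e., a genuine clockwise invariant orbit --- so the issue you ``anticipate'' is not hypothetical. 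Your proposed repair (the spurious orbit has $V_e=(d+r^\ast)^2/2$, hence lies outside a small enough sublevel set) is the right instinct, but it is not licensed by the stated hypotheses: Assumption~\ref{assumption} constrains only $r(0)$, not $\theta(0)$ or $V_1(0)$, so initial conditions with the target on the robot's right may start above that level, and the qualifier ``in the set $\mathcal{Z}_r$'' does not supply the needed restriction. In short, your argument and the paper's coincide up to this point; the paper's proof elides the second branch entirely, while yours acknowledges it but closes it only by implicitly shrinking the set of admissible initial conditions --- which is consistent with the paper's own later description of the result as local asymptotic stability, but is not a complete proof of the theorem as literally stated.
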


\begin{proof}
	Consider the joint candidate Lyapunov function 
	\begin{equation}\label{V1}
		V_1(\bar{e}_x, \bar{e}_y, e_r) = V_e(\bar{e}_x, \bar{e}_y) + dk_2V_r(e_r),
	\end{equation}
	where $V_e$ and $V_r$ are defined in \eqref{Ve} and \eqref{Vr}, respectively. Using \eqref{Ve_derivative} and \eqref{Vr_derivative}, the time derivative of \eqref{V1} is given by:
	\begin{equation}\label{V_derivative_mid}
		\dot{V}_1 = (v - d\omega)\bar{e}_x + dk_2 \left(\frac{r - d}{r}\right) v\bar{e}_x \eta(r).
	\end{equation}
	Further, using \eqref{range_dynamics}, the control \eqref{control_1} can be written as:
	\begin{equation}\label{control_1_mid}
		\omega = \frac{v}{d} + v k_1 \Omega(v\bar{e}_x) + v k_2 \left(\frac{r - d}{r}\right) \eta(r),
	\end{equation}
	which, upon substitution into \eqref{V_derivative_mid} and simplification, yields
	\begin{equation}\label{V1_derivative}
		\dot{V}_1 = -dk_1 (v\bar{e}_x)\Omega(v\bar{e}_x). 
	\end{equation}
	Since $d > 0, \ k_1 > 0$ and $\Omega(\bullet)$ satisfies property (P1) from the preliminaries, we conclude that $\dot{V}_1$ is negative semi-definite, i.e., $\dot{V}_1 \leq 0$. As Assumption~\ref{assumption} holds, it now immediately follows from Lemma~\ref{blf_lemma} that $-\delta_a < e_r(t) < \delta_b$ for all $t \geq 0$ in the set $\mathcal{Z}_r$, since the potential $V_r \to \infty$ as $e_r \to -\delta_a$ or $e_r \to \delta_b$ and $c_1(\bar{e}_x^2 + \bar{e}_y^2) \leq V_e \leq c_2(\bar{e}_x^2 + \bar{e}_y^2)$ for some constants $c_1 \in [0, 1/2]$ and $c_2 \in [1/2, \infty)$. Substituting for $\delta_a$ and $\delta_b$ from \eqref{barriers}, and $e_r(t)$ from \eqref{radial_error}, the preceding error inequality $-\delta_a < e_r(t) < \delta_b$ implies that $r_i < r(t) < r_o$ for all $t \geq 0$ in $\mathcal{Z}_r$. This proves the first statement.
	
	To prove the second statement, notice that $V_1(\bar{e}_x, \bar{e}_y, e_r) \leq V_1(\bar{e}_x(0), \bar{e}_y(0), e_r(0))$ for all $t \geq 0$ in $\mathcal{Z}_r$, because $\dot{V}_1 \leq 0$. Further, it is straightforward to see that $V_1(\bar{e}_x(0), \bar{e}_y(0), e_r(0))$ is finite and positive for any given initial conditions in $\mathcal{Z}_r$. In other words, it implies that there exists a positive constant $\lambda_1$ such that $V_1(\bar{e}_x, \bar{e}_y, e_r) \leq \lambda_1$ for all initial conditions in $\mathcal{Z}_r$. Note that the set $\Lambda_1 \triangleq \{[\bar{e}_x, \bar{e}_y, e_r]^\top \in \mathbb{R}^2 \times (-\delta_a, \delta_b) \mid V_1(\bar{e}_x, \bar{e}_y, e_r) \leq \lambda_1\}$ is compact and positively invariant, since $V_1$ is positive definite and continuously differentiable, and $\dot{V}_1 \leq 0$ in $\mathcal{Z}_r$. From LaSalle's invariance principle \cite[Theorem 4.4, pg. 128]{khalil2002control}, it follows that all the trajectories of \eqref{system_model} and \eqref{error_dynamics_transformed} converge to the largest invarinat set $\Delta_1$ contained in the set $\Gamma_1 \triangleq \{[\bar{e}_x, \bar{e}_y, e_r] \in \mathbb{R}^2 \times (-\delta_a, \delta_b) \mid \dot{V}_1 = 0\} \subset \Lambda_1$ as $t \to \infty$. Notice that $\dot{V}_1 = 0$ only if $v\bar{e}_x = 0 \implies \bar{e}_x = 0$, as $v \neq 0$ (see \eqref{V1_derivative}). That is, $\bar{e}_x = \dot{\bar{e}}_x = 0$ in $\Gamma_1$, which on substitution in \eqref{error_dynamics_transformed}, gives 
	\begin{equation}\label{invariant_condition_1}
		v + (\bar{e}_y - d)\omega = 0,	
	\end{equation}
	and $\dot{\bar{e}}_y = 0$, which implies that $\bar{e}_y$ is constant, and so is, $\bar{e}_y - d$ in $\Gamma_1$ (as $d$ is fixed). Note that $\bar{e}_y - d \neq 0$ in $\Gamma_1$, as if $\bar{e}_y - d = 0$, $r = 0$ in $\Gamma_1$ according to \eqref{r_square}, which is a contradiction since $r_i < r(t) < r_o$ for all $t \geq 0$ in $\mathcal{Z}_r$, as proved in the first statement. Consequently, it follows from \eqref{r_square} that $r(t) = r_c$ as $t \to \infty$ in $\Gamma_1$ where $r_c \in (r_i, r_o)$ is a constant. This implies that the robot \eqref{system_model} moves on a circular path with angular velocity $\omega = v/r_c$ in $\Gamma_1$ as $t \to \infty$. Substituting $\omega = v/r_c$ into \eqref{control_1_mid} and simplifying in $\Gamma_1$, we get $(v/r_c) = (v/d) + v k_2 ((r_c - d)/r_c) \eta(r_c)$, which reduces to the condition
	\begin{equation}\label{condition}
		(r_c - d)(1 + dk_2\eta(r_c)) = 0.
	\end{equation}
	From \eqref{eta}, it can be easily verified that $\eta(e_r) > 0$ for $e_r \in (-\delta_a, \delta_b)$, and hence, $1 + dk_2\eta(r_c) > 0$ in $\Gamma_1$. Therefore, \eqref{condition} holds only if $r_c = d$, that is, the robot moves around the circular trajectory of the desired radius $d$ with the angular rate $\omega = v/d$ in $\Gamma_1$ as $t \to \infty$. Now, using \eqref{invariant_condition_1}, one can conclude that $\bar{e}_y = 0$ as $t \to \infty$ in $\Gamma_1$. Consequently, it follows that the largest invariant set $\Delta_1 \subset \Gamma_1 \subset \Lambda_1$ contains only the desired equilibrium where $\bar{e}_x(t) = \bar{e}_y(t) \to 0$, and $r(t) \to d$, as $t \to \infty$. This concludes the proof. 
\end{proof}

Please note that in the above proof, the set $\Gamma_1$ is well-defined because $\dot{V}_1$ in \eqref{V1_derivative} and the potential $V_1(\bar{e}_x, \bar{e}_y, e_r)$ in \eqref{V1} do not show any explicit dependence on time $t$, since the speed profile $v(t)$ is assumed to be nonzero (i.e., it is chosen to be either positive or negative for all time $t \geq 0$). This feature allows the use of LaSalle's invariance principle to prove the asymptotic stability of the closed-loop system, rather than relying on Barbalat's lemma (please refer to \cite[pp. 323$-$325]{khalil2002control}).    

\subsubsection{Availability of range-only measurements}
In control law \eqref{control_1}, notice that only the term $\Omega(r\dot{r})$ is dependent on the range rate $\dot{r}$. Our approach here is to modify the argument $r\dot{r}$ of the design function $\Omega$ via a term that is solely dependent on the range information and assures the stability of the desired equilibrium of the closed-loop stability. In this direction, we use the output dynamic feedback approach \cite{wang2024target} where we designate $y = (1/2)r^2$ as the output and propose the following variation of the control law \eqref{control_1}:
\begin{subequations}\label{control_2}
	\begin{align}
		\label{control_2a} \omega &= \frac{v}{d} + v k_1 \Omega(-\kappa z + y) + v k_2 \left(\frac{r - d}{r}\right) \eta(r)\\
		\label{control_2b}	\dot{z} &= -\kappa z + y,
	\end{align}
\end{subequations}
where $\kappa > 0$ is a gain term. Note that in the revised control law \eqref{control_2a} our approach is to estimate $r\dot{r}$ using only the range $r$ and an auxiliary variable $z \in \mathbb{R}$ with dynamics \eqref{control_2b}. For further analysis, we denote $\beta \triangleq -\kappa z + y$. Using \eqref{range_dynamics}, the time derivative of $\beta$ is obtained as:
\begin{equation}\label{beta_dynamics}
	\dot{\beta} = -\kappa\dot{z} + r\dot{r} = -\kappa\beta + v\bar{e}_x. 
\end{equation}
Further, the resultant error dynamics \eqref{error_dynamics_transformed}, under the control \eqref{control_2}, can be written as: 
\begin{subequations}\label{error_dynamics_new}
	\begin{align}
		\nonumber \dot{\bar{e}}_x &= v + (\bar{e}_y - d)\left[\frac{v}{d} + v k_1 \Omega(\beta) + v k_2 \left(\frac{r - d}{r} \right)\eta(r) \right]\\
		& = \frac{v\bar{e}_y}{d} + (\bar{e}_y - d)\left[v k_1 \Omega(\beta) + v k_2 \left(\frac{r - d}{r} \right)\eta(r) \right]\\
		\dot{\bar{e}}_y &= -\bar{e}_x \left[\frac{v}{d} + v k_1 \Omega(\beta) + v k_2 \left(\frac{r - d}{r} \right)\eta(r) \right]. 
	\end{align}
\end{subequations}

We have the following result: 

\begin{theorem}\label{thm_control_design_2}	
	Consider the robot-target engagement shown in Fig.~\ref{fig_problem_description} and assume that Assumption~\ref{assumption} holds. Let the robot \eqref{system_model} be governed by the control law \eqref{control_2}. Then, $r_i < r(t) < r_o$ for all $t \geq 0$, and $\bar{e}_x(t) = \bar{e}_y(t) = \beta(t) \to 0$ and $r(t) \to d$ as $t \to \infty$ in the set $\mathcal{Z}_r$. 
\end{theorem}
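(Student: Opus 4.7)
The plan is to augment the joint Lyapunov function of Theorem~\ref{thm_control_design_1} with a third summand that registers the auxiliary state $\beta$. Inspired by the fact that $\int_{0}^{\beta} \Omega(\tau)\,d\tau$ is positive-definite in $\beta$ (property (P2)) and, along \eqref{beta_dynamics}, its time derivative produces a cross-term of the form $\Omega(\beta)\,v\bar{e}_x$ that exactly matches the residual cross-term left over after adapting the computation of Theorem~\ref{thm_control_design_1} to the new control \eqref{control_2a}, I would propose
\begin{equation*}
V_2(\bar{e}_x,\bar{e}_y,e_r,\beta) \;=\; V_e(\bar{e}_x,\bar{e}_y) \;+\; dk_2\,V_r(e_r) \;+\; dk_1 \int_{0}^{\beta} \Omega(\tau)\,d\tau,
\end{equation*}
with $V_e$ and $V_r$ as in \eqref{Ve} and \eqref{Vr}.

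A short calculation that reuses the cancellation from Theorem~\ref{thm_control_design_1} (only $v\bar{e}_x$ is replaced by $\Omega(\beta)$ inside $\omega$) shows $\dot V_e + dk_2\dot V_r = -dk_1\,v\bar{e}_x\,\Omega(\beta)$. Differentiating the third summand using \eqref{beta_dynamics} gives $dk_1\Omega(\beta)(-\kappa\beta + v\bar{e}_x)$, whose $v\bar{e}_x\Omega(\beta)$ term cancels the cross-term above and leaves
\begin{equation*}
\dot V_2 \;=\; -dk_1\kappa\,\beta\,\Omega(\beta) \;\le\; 0
\end{equation*}
by property (P1). With $\dot V_2 \le 0$ in hand, Lemma~\ref{blf_lemma} applies essentially verbatim as in Theorem~\ref{thm_control_design_1}: the ABLF summand $V_r$ forces the barrier $e_r \in \{-\delta_a,\delta_b\}$ unattainable, while the quadratic summand $V_e$ and the $\beta$-integral (bounded by class-$\mathcal K_\infty$ functions of $|\beta|$ via (P2)) play the role of $W$. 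This yields the first claim, namely $r_i < r(t) < r_o$ for all $t\ge 0$.

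For the convergence statement I would invoke LaSalle's invariance principle on the compact positively invariant sublevel set $\{V_2 \le V_2(0)\}$. The condition $\dot V_2 = 0$ forces $\beta = 0$ by (P1); on the largest invariant set $\Delta_2$, $\beta \equiv 0$ implies $\dot\beta = 0$, and \eqref{beta_dynamics} then gives $v\bar{e}_x = 0$, whence $\bar{e}_x \equiv 0$ since $v \neq 0$. With $\bar{e}_x = \dot{\bar{e}}_x = 0$ and $\Omega(0)=0$ substituted into \eqref{control_2a} and \eqref{error_dynamics_new}, the remaining analysis is line-for-line parallel to the last part of the proof of Theorem~\ref{thm_control_design_1}: $\bar{e}_y$ is constant on $\Delta_2$, the robot is forced onto a circle of some radius $r_c \in (r_i,r_o)$, an algebraic identity of the form \eqref{condition} pins down $r_c = d$, and the analogue of \eqref{invariant_condition_1} then gives $\bar{e}_y = 0$. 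Hence $\Delta_2$ reduces to the single desired equilibrium $(\bar{e}_x,\bar{e}_y,\beta,e_r) = (0,0,0,0)$.

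The principal obstacle is guessing the correct third summand and, in particular, the coefficient $dk_1$: any other weight would leave a residual $v\bar{e}_x\Omega(\beta)$ term of indefinite sign in $\dot V_2$ and break negative semi-definiteness. Observe that the $-\kappa\beta$ drift in \eqref{beta_dynamics}, introduced merely so that $\beta$ acts as an output-feedback proxy for $r\dot r$, is precisely what supplies the strictly dissipative $-dk_1\kappa\beta\Omega(\beta)$ contribution and thereby bridges the filter state $z$ to the Lyapunov argument. Once that cancellation is in place, the rest of the proof is a routine extension of Theorem~\ref{thm_control_design_1}, with the verification that $\beta \equiv 0$ on the invariant set serving as the only genuinely new step.
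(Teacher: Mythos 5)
Your proposal is correct and follows essentially the same route as the paper: the same augmented Lyapunov function $V_2 = V_e + dk_2 V_r + dk_1\int_0^\beta \Omega(\tau)\,d\tau$, the same cancellation yielding $\dot V_2 = -d\kappa k_1\,\beta\,\Omega(\beta) \le 0$, the same appeal to Lemma~\ref{blf_lemma} for the barrier claim, and the same LaSalle argument in which $\beta \equiv 0$ on the invariant set forces $\bar e_x = 0$ via \eqref{beta_dynamics}, after which the analysis of Theorem~\ref{thm_control_design_1} is reused. Your explicit remark that the $\beta$-integral is sandwiched by class-$\mathcal K_\infty$ functions via (P2), so that it can serve as part of $W$ in Lemma~\ref{blf_lemma}, is a slightly more detailed justification than the paper gives, but the argument is otherwise the same.
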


\begin{proof}
	Consider the following candidate Lyapunov function
	\begin{equation}\label{V2}
		V_2(\bar{e}_x, \bar{e}_y, e_r, \beta) = V_1(\bar{e}_x, \bar{e}_y, e_r)  + dk_1\int_{0}^{\beta} \Omega(\tau) d\tau,
	\end{equation}
	where $V_1$ is given by \eqref{V1}. Note that $V_2$ is positive-definite and satisfies all the requirements indicated in Lemma~\ref{blf_lemma} because of the properties of the design function $\Omega$, defined in the preliminaries. Following the steps similar to the proof of Theorem~\ref{thm_control_design_1}, the time derivative of \eqref{V1}, along the new error dynamics \eqref{error_dynamics_new}, is obtained as $\dot{V}_1 = -dk_1 (v\bar{e}_x)\Omega(\beta)$. Therefore, the time derivative of \eqref{V2}, along the new error dynamics \eqref{error_dynamics_new}, is given by $\dot{V}_2 = -dk_1 (v\bar{e}_x)\Omega(\beta) + dk_1\dot{\beta}\Omega(\beta)$. Now, using \eqref{beta_dynamics}, we have 
	\begin{align}\label{V2_derivative}
		\nonumber \dot{V}_2 &= -dk_1 (v\bar{e}_x)\Omega(\beta) + dk_1 (-\kappa \beta + v\bar{e}_x)\Omega(\beta)\\
		& = -d\kappa k_1 \beta \Omega(\beta).
	\end{align}
	Since $\dot{V}_2 \leq 0$ and Assumption~\ref{assumption} holds, it immediately follows from Lemma~\ref{blf_lemma} that $r_i < r(t) < r_o$ for all $t \geq 0$ in $\mathcal{Z}_r$ (similar to the proof of Theorem~\ref{thm_control_design_1}). Along the similar steps, one can further conclude that there exists a positive constant $\lambda_2$ such that $V_2(\bar{e}_x, \bar{e}_y, e_r, \beta) \leq \lambda_2$ for all initial conditions in $\mathcal{Z}_r$. Note that the set $\Lambda_2 \triangleq \{[\bar{e}_x, \bar{e}_y, e_r, \beta]^\top \in \mathbb{R}^2 \times (-\delta_a, \delta_b) \times \mathbb{R} \mid V_2(\bar{e}_x, \bar{e}_y, e_r, \beta) \leq \lambda_2\}$ is compact and positively invariant, since $V_2$ is positive definite and continuously differentiable, and $\dot{V}_2 \leq 0$ in $\mathcal{Z}_r$.  Now, using LaSalle's invariance principle \cite[Theorem 4.4, pg. 128]{khalil2002control}, it can be concluded that all the trajectories of \eqref{system_model} converge to the largest invariant set $\Delta_2$ contained in the set $\Gamma_2 \triangleq \{[\bar{e}_x, \bar{e}_y, e_r, \beta]^\top \mid \dot{V}_2 = 0\} \subset \Lambda_2$ as $t \to \infty$. From \eqref{V2_derivative}, note that $\dot{V}_2 = 0$ only if $\beta = 0$. That is, $\beta = \dot{\beta} = 0$ in $\Gamma_2$, which on substitution in \eqref{beta_dynamics}, gives $v\bar{e}_x = 0 \implies \bar{e}_x = 0$ (hence $\dot{\bar{e}}_x = 0$) in $\Gamma_2$ as $v \neq 0$. Now, the remaining analysis follows the same reasoning as in the proof of Theorem~\ref{thm_control_design_1}, and hence, omitted for brevity. Consequently, it follows that the largest invariant set $\Delta_2 \subset \Gamma_2 \subset \Lambda_2$ contains only the desired equilibrium where $\bar{e}_x(t) = \bar{e}_y(t) = \beta(t) \to 0$, and $r(t) \to d$, as $t \to \infty$. This concludes the proof. 
\end{proof}

\begin{remark}
Unlike the control law \eqref{control_1}, the controller \eqref{control_2} relies solely on range-based measurements and an auxiliary dynamic variable $z$, which can be initialized arbitrarily. It is important to highlight that our approach is based only on the natural assumption on the initial range $r(0)$, as stated in Assumption~\ref{assumption}; we do not impose any requirements on the initial heading angle $\theta(0)$ of the robot model \eqref{system_model}, unlike \cite{jain2019trajectory} where the initial conditions must be appropriately chosen for both $r(0)$ and $\theta(0)$ and the control implementation requires both $r(t)$ and $\theta(t)$ information. Further, the design function $\Omega$ in controllers \eqref{control_1} and \eqref{control_2} accommodates a large class of functions, including unbounded ones, as per its properties \emph{(P1)} and \emph{(P2)} in the preliminaries, unlike \cite{wang2024target}, which does not account for such generality in design functions. These are two major features of our work, along with the consideration of both safe and sensing constraints.  
\end{remark}

\section{Analysis of Post-Design Signals}
Exploiting the potentials \eqref{V1} and \eqref{V2}, this section obtains the bounds on the radial error $e_r(t)$ and the range $r(t)$ in a restricted sense and also shows the boundedness of the proposed controllers \eqref{control_1} and \eqref{control_2}. In this direction, we present the following two theorems corresponding to the above cases of availability/non-availability of the range-rate information. 

\begin{theorem}[Availability of range and range-rate measurements]\label{thm_bounds_case1}
Under the conditions given in Theorem~\ref{thm_control_design_1}, the following statements hold for all $t \geq 0$ within the set $\mathcal{Z}_r$:
\begin{enumerate}[leftmargin=*]
	\item[(a)] The squared norm of the transform error vector $[\bar{e}_x(t), \bar{e}_y(t)]^\top$ satisfies:
	\begin{equation}\label{error_bound_1}
	\bar{e}_x^2(t) + \bar{e}_y^2(t) \leq (\sqrt{2V_1(0)})^2,	
	\end{equation}
	where $V_1(0) \triangleq V_1(\bar{e}_x(0), \bar{e}_y(0), e_r(0))$, obtained from \eqref{V1} at $t = 0$. 
	\item[(b)] The radial error $e_r(t)$ and the range $r(t)$ belong to the following compact sets: 
	\begin{subequations}\label{bound_1}
	\begin{align}
	\label{range_error_bound_1}e_r(t) &\in [-\delta_a \Upsilon_1, \ \delta_b \Upsilon_1],\\
	\label{range_bound_1} r(t) &\in [r_i + \delta_a(1 - \Upsilon_1), \ r_o - \delta_b(1 - \Upsilon_1)],
	\end{align}
	\end{subequations}
	where $\Upsilon_1 = \sqrt{1 - {\rm e}^{-(2/dk_2) V_1(0)}}$ is a constant.  
\end{enumerate}
\end{theorem}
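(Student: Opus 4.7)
The plan is to exploit the fact, already established in Theorem~\ref{thm_control_design_1}, that $\dot{V}_1 \leq 0$ along the closed-loop trajectories, so $V_1(\bar e_x(t),\bar e_y(t),e_r(t)) \leq V_1(0)$ for all $t\geq 0$ in $\mathcal{Z}_r$. Because $V_1 = V_e + dk_2V_r$ is a sum of two nonnegative quantities, each summand is individually bounded above by $V_1(0)$. Bounding $V_e$ yields part~(a), while bounding $V_r$ and then inverting the ABLF expression yields part~(b).

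For part~(a), I would simply note that $V_e(\bar e_x,\bar e_y) \leq V_1(t) \leq V_1(0)$. Substituting \eqref{Ve} gives $\tfrac{1}{2}(\bar e_x^2 + \bar e_y^2) \leq V_1(0)$, which is \eqref{error_bound_1}. This step is essentially immediate once monotonicity of $V_1$ is invoked.

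For part~(b), the key step is to invert the ABLF in \eqref{Vr}. From $dk_2 V_r(e_r) \leq V_1(0)$, split according to the logic parameter \eqref{logic_parameter}. When $e_r > 0$, I would use $V_r(e_r) = \tfrac{1}{2}\ln[\delta_b^2/(\delta_b^2-e_r^2)] \leq V_1(0)/(dk_2)$, exponentiate, and rearrange to obtain
\begin{equation*}
e_r^2 \leq \delta_b^2\bigl(1 - \mathrm{e}^{-(2/dk_2)V_1(0)}\bigr) = (\delta_b \Upsilon_1)^2,
\end{equation*}
hence $e_r \leq \delta_b \Upsilon_1$. The case $e_r \leq 0$ is symmetric, replacing $\delta_b$ by $\delta_a$ and yielding $-e_r \leq \delta_a\Upsilon_1$. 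Combining the two branches gives \eqref{range_error_bound_1}. Finally, translating back via $r = e_r + d$, and using $\delta_a = d - r_i$ and $\delta_b = r_o - d$ from \eqref{barriers}, I would rewrite $d - \delta_a\Upsilon_1 = r_i + \delta_a(1-\Upsilon_1)$ and $d + \delta_b\Upsilon_1 = r_o - \delta_b(1-\Upsilon_1)$ to obtain \eqref{range_bound_1}.

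No step really constitutes a deep obstacle here: the argument is structurally routine once the descent property $\dot V_1 \leq 0$ is in hand. The only place to be careful is the split across the logic parameter $\sigma(e_r)$ when inverting $V_r$, to make sure the correct branch ($\delta_b$ versus $\delta_a$) is used on each side of zero and that $\Upsilon_1 \in [0,1)$ is consistently a single constant independent of the sign of $e_r$.
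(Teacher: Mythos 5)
Your proposal is correct and follows essentially the same route as the paper's own proof: invoke $\dot V_1\leq 0$ to get $V_1(t)\leq V_1(0)$, bound each nonnegative summand $V_e$ and $dk_2V_r$ separately, and invert the ABLF branchwise via exponentiation before translating $e_r$ back to $r$ using \eqref{radial_error} and \eqref{barriers}. The algebraic identities $d-\delta_a\Upsilon_1=r_i+\delta_a(1-\Upsilon_1)$ and $d+\delta_b\Upsilon_1=r_o-\delta_b(1-\Upsilon_1)$ check out, so nothing further is needed.
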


\begin{proof}
Since $\dot{V}_1 \leq 0$ in $\mathcal{Z}_r$, $V_1(\bar{e}_x(t), \bar{e}_y(t), e_r(t)) \leq V_1(0)$ for all $t \geq 0$ in $\mathcal{Z}_r$. Consequently, it follows from \eqref{V1} that $V_e(\bar{e}_x, \bar{e}_y) \leq V_1(0)$ and $dk_2V_r(e_r) \leq V_1(0)$ for all $t \geq 0$ in $\mathcal{Z}_r$. Now, substituting for $V_e$ and $V_r$ from \eqref{Ve} and \eqref{Vr}, respectively, the preceding inequalities imply $\bar{e}_x^2 + \bar{e}_y^2 \leq (\sqrt{2V_1(0)})^2$, proving the first result, and 
	\begin{equation}\label{Inequality_two_parts}
		\frac{1}{dk_2}V_1(0) \geq
		\begin{cases}
			\dfrac{1}{2}  \ln \left[\dfrac{\delta_b^2}{\delta_b^2 - e_r^2(t)} \right], & \text{if }  0 < e_r(t) < \delta_b\\
			\dfrac{1}{2}  \ln \left[\dfrac{\delta_a^2}{\delta_a^2 - e_r^2(t)} \right], & \text{if } -\delta_a < e_r(t) \leq 0,
		\end{cases}
	\end{equation}
	for all $t \geq 0$ in $\mathcal{Z}_r$. Taking exponential on both sides of \eqref{Inequality_two_parts}, yields
	\begin{align*}
		e^2_r(t)  & \leq 
		\begin{cases}
			\delta^2_b (1 - {\rm e}^{-(2/dk_2) V_1(0)}), & \text{if }  0 < e_r(t) < \delta_b, \\
			\delta^2_a (1 - {\rm e}^{-(2/dk_2) V_1(0)}), &  \text{if} -\delta_a < e_r(t) \leq 0.
		\end{cases}
	\end{align*}
	By taking square root on both sides of the above inequality, we obtain $e_r(t) \leq \delta_b \sqrt{1 - {\rm e}^{-(2/dk_2) V_1(0)}}$ for $e_r(t) \in (0, \delta_b)$ and $e_r(t) \geq -\delta_a \sqrt{1 - {\rm e}^{-(2/dk_2) V_1(0)}}$ for $e_r(t) \in (-\delta_a, 0]$. By combining both, it follows that $-\delta_a \sqrt{1 - {\rm e}^{-(2/dk_2) V_1(0)}} \leq e_r(t) \leq \delta_b \sqrt{1 - {\rm e}^{-(2/dk_2) V_1(0)}}$ for all $t \geq 0$ in $\mathcal{Z}_r$. Further, substituting for $e_r(t)$ from \eqref{radial_error} and using \eqref{barriers}, we can get $r_i + \delta_a[1 - \sqrt{1 - {\rm e}^{-(2/dk_2) V_1(0)}}] \leq r(t) \leq r_o - \delta_b[1 - \sqrt{1 - {\rm e}^{-(2/dk_2) V_1(0)}}]$ for all $t \geq 0$ in $\mathcal{Z}_r$. This proves the second claim.   
\end{proof}

\begin{theorem}[Availability of range-only measurements]\label{thm_bounds_case2}
	Under the conditions given in Theorem~\ref{thm_control_design_2}, the following statements hold for all $t \geq 0$ within the set $\mathcal{Z}_r$:
	\begin{enumerate}[leftmargin=*]
		\item[(a)] The squared norm of the transform error vector $[\bar{e}_x(t), \bar{e}_y(t)]^\top$ satisfies the following:
		\begin{equation}\label{error_bound_2}
			\bar{e}_x^2(t) + \bar{e}_y^2(t) \leq (\sqrt{2V_2(0)})^2,	
		\end{equation}
		where $V_2(0) \triangleq V_2(\bar{e}_x(0), \bar{e}_y(0), e_r(0), \beta(0))$, obtained from \eqref{V2} at $t = 0$. 
		\item[(b)] The radial error $e_r(t)$ and the range $r(t)$ belong to the following compact sets: 
		\begin{subequations}\label{bound_2}
			\begin{align}
			\label{range_error_bound_2}	e_r(t) &\in [-\delta_a \Upsilon_2, \ \delta_b \Upsilon_2],\\
			\label{range_bound_2}	r(t) &\in [r_i + \delta_a(1 - \Upsilon_2), \ r_o - \delta_b(1 - \Upsilon_2)],
			\end{align}
		\end{subequations}
		where $\Upsilon_2 = \sqrt{1 - {\rm e}^{-(2/dk_2) V_2(0)}}$ is a constant.
		\item[(c)] The modulus of the signal $\beta(t)$ in \eqref{beta_dynamics} remains bounded by 
		\begin{equation}\label{beta_bound}
		|\beta(t)| \leq \alpha^{-1}_1\left(\frac{V_2(0)}{dk_1}\right),
		\end{equation} 
		for some class $\mathcal{K}_{\infty}$ function $\alpha_1$, satisfying the property \emph{(P2)} of the design function $\Omega$.
		\item[(d)] The modulus of the auxiliary signal $z(t)$ in \eqref{control_2b} remains bounded by $\underline{\chi} \leq |z(t)| \leq \overline{\chi}$, where
		\begin{subequations}\label{chi_bounds}		
		\begin{align}
		 \underline{\chi} & = \frac{1}{\kappa}\left[\frac{1}{2}(r_i + \delta_a(1 - \Upsilon_2))^2 - \alpha^{-1}_1\left(\frac{V_2(0)}{dk_1}\right)\right],\\
		  \overline{\chi} & = \frac{1}{\kappa}\left[\frac{1}{2}(r_o - \delta_b(1 - \Upsilon_2))^2 + \alpha^{-1}_1\left(\frac{V_2(0)}{dk_1}\right)\right].
		\end{align}
	\end{subequations}		 
	\end{enumerate}
\end{theorem}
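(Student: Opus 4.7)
The proof has a clear monotonicity backbone with four corollaries, and the plan is to first exploit the fact, already established in the proof of Theorem~\ref{thm_control_design_2}, that $\dot{V}_2 \le 0$ on $\mathcal{Z}_r$, so $V_2(\bar{e}_x(t),\bar{e}_y(t),e_r(t),\beta(t)) \le V_2(0)$ for all $t \ge 0$. Because every summand in $V_2 = V_e + dk_2 V_r + dk_1\int_0^{\beta}\Omega(\tau)d\tau$ is nonnegative (the first two by inspection, the third by property (P2) of the design function), each summand is individually bounded by $V_2(0)$. This single observation is the hinge on which all four statements turn.

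For part (a), I would use $V_e = \tfrac{1}{2}(\bar{e}_x^2+\bar{e}_y^2) \le V_2(0)$ to obtain \eqref{error_bound_2} directly. For part (b), I would essentially replay the calculation from the proof of Theorem~\ref{thm_bounds_case1}, with $V_1(0)$ replaced by $V_2(0)$: the case split on the sign of $e_r$ yields the asymmetric logarithmic inequality analogous to \eqref{Inequality_two_parts}; exponentiating, taking square roots, and unifying the two branches gives \eqref{range_error_bound_2}; shifting back via $r = e_r + d$ together with \eqref{barriers} produces \eqref{range_bound_2}.

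For part (c), I would use $dk_1 \int_0^{\beta}\Omega(\tau)d\tau \le V_2(0)$ together with the left half of property (P2), namely $\alpha_1(|\beta|) \le \int_0^{\beta}\Omega(\tau)d\tau$. Rearranging gives $\alpha_1(|\beta|) \le V_2(0)/(dk_1)$, and applying $\alpha_1^{-1}$ (which is well defined and monotone increasing since $\alpha_1 \in \mathcal{K}_\infty$) yields \eqref{beta_bound}. For part (d), the idea is to invert the definition $\beta = -\kappa z + y$ with $y = r^2/2$ to obtain $z = (r^2/2 - \beta)/\kappa$. The upper bound on $z$ then follows by combining the upper bound on $r$ from part (b) with $|\beta| \le \alpha_1^{-1}(V_2(0)/(dk_1))$ from part (c), and the lower bound follows symmetrically, producing \eqref{chi_bounds}.

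The argument is mostly bookkeeping rather than structural, since the heavy lifting (invariance of $\mathcal{Z}_r$ and non-increase of $V_2$) has already been done in Theorem~\ref{thm_control_design_2}. The only conceptual subtlety I anticipate is the inversion step in part (d), where one has to be mindful that the claimed bounds on $|z|$ make sense as two-sided bounds on $z$ itself; this implicitly requires $\underline{\chi} \ge 0$, which amounts to a mild condition relating the inner radius $r_i$ to the initial potential $V_2(0)$. If this condition fails, the lower bound on $|z|$ collapses to zero while the upper bound remains unaffected. That borderline case is the one item I would flag as deserving a short clarifying remark in the final write-up.
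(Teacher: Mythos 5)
Your proposal is correct and, for parts (a)--(c), follows exactly the paper's route: monotonicity of $V_2$ gives $V_2(t)\le V_2(0)$, nonnegativity of each summand bounds each summand individually, part (b) replays the case-split/exponentiation argument of Theorem~\ref{thm_bounds_case1} with $V_2(0)$ in place of $V_1(0)$, and part (c) is the (P2) inversion. The only divergence is in part (d): the paper keeps everything in terms of moduli, applying the reverse triangle inequality $\bigl|\,|\kappa z|-|y|\,\bigr|\le|-\kappa z+y|=|\beta|$ and then sandwiching $|y|=r^2/2$ using \eqref{range_bound_2}, whereas you solve $z=(y-\beta)/\kappa$ and combine the interval bounds on $y$ and $\beta$; both yield the identical constants in \eqref{chi_bounds}, and yours has the minor advantage of producing a two-sided bound on $z$ itself rather than only on $|z|$. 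Your flagged subtlety about $\underline{\chi}$ possibly being negative is a fair observation that applies equally to the paper's argument, but it does not create a gap: when $\underline{\chi}<0$ the inequality $\underline{\chi}\le|z(t)|$ is vacuously true since $|z(t)|\ge 0$, so the stated conclusion holds in all cases, merely becoming uninformative on that side.
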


\begin{figure*}[t!]
	\centering{\hspace*{-0.3cm}
		\subfigure[Trajectory]{\includegraphics[width=0.27\textwidth]{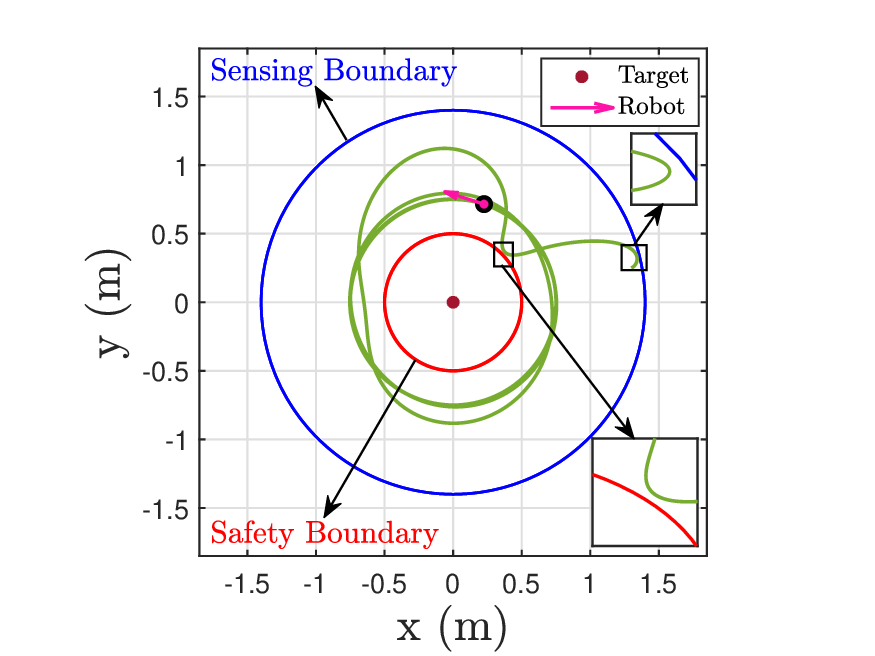}
			\label{traj_range_only}}\hspace*{-0.8cm}
		\subfigure[Radial error]{\includegraphics[width=0.27\textwidth]{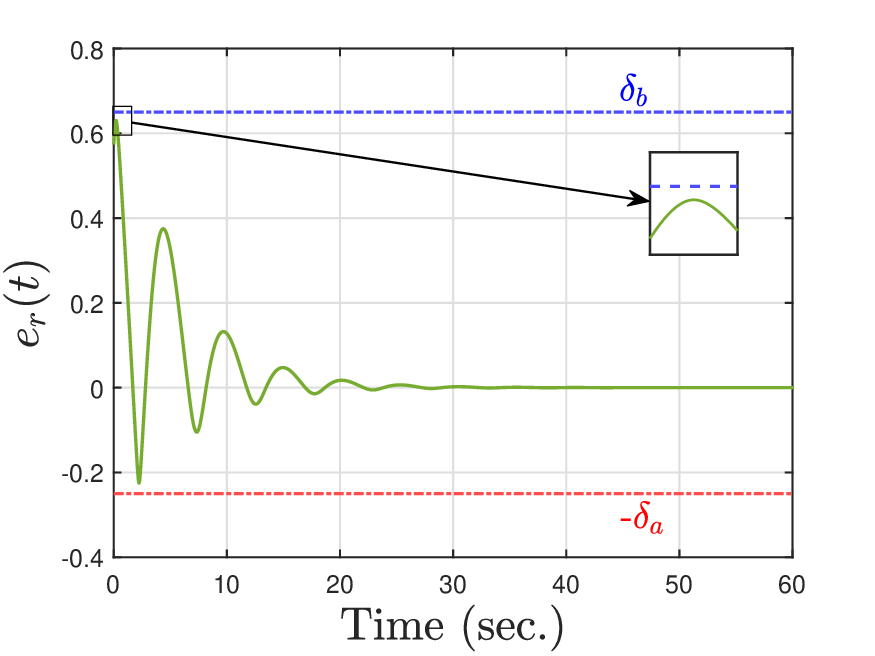}\label{error_range_only}}\hspace*{-0.5cm}
		\subfigure[Range]{\includegraphics[width=0.27\textwidth]{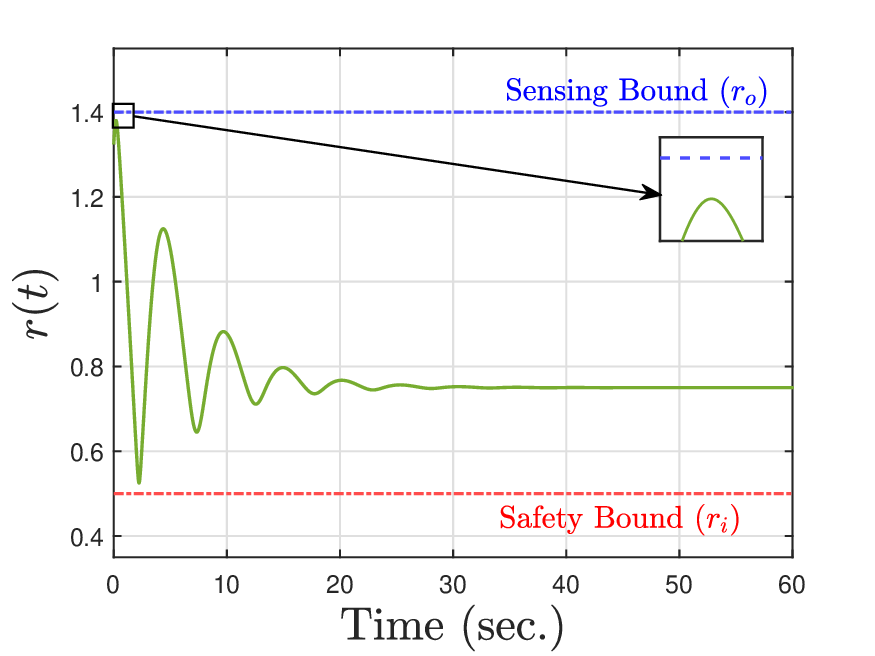}\label{distance_range_only}}\hspace*{-0.5cm}
		\subfigure[Control law]{
			\includegraphics[width=0.27\textwidth]{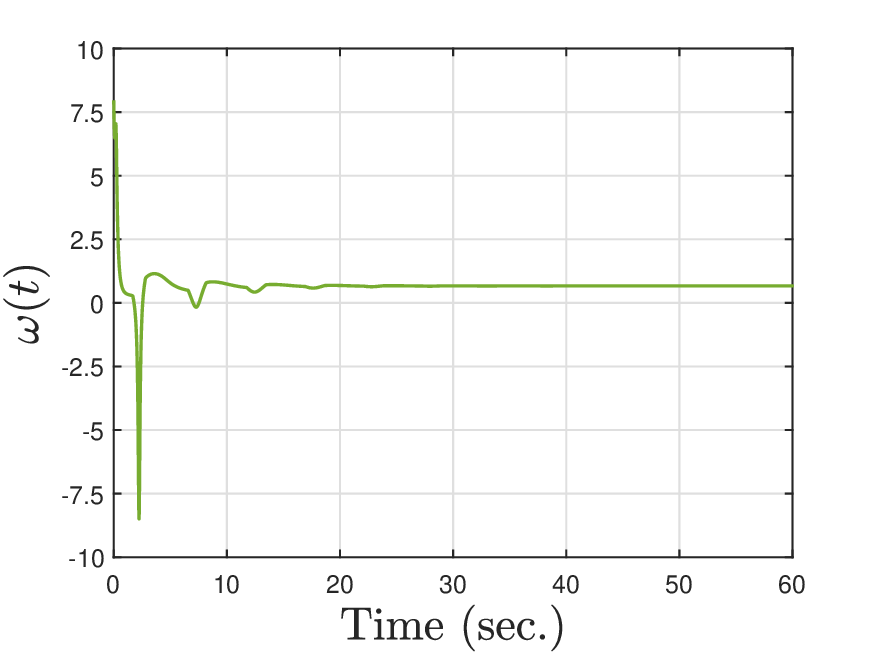}\label{control_range_only}}
		\caption{Robot's trajectory and time evolution of radial error \eqref{radial_error}, range and control law \eqref{control_2a} under range-only controller \eqref{control_2}.}
		\label{Range_only}}
		\vspace*{-15pt}
\end{figure*}

\begin{proof}
Since $\dot{V}_2 \leq 0$ in $\mathcal{Z}_r$ from \eqref{V2_derivative}, the proof of part (a) and part (b) follows along the similar steps as in Theorem~\ref{thm_bounds_case1} by replacing $V_1(0)$ by $V_2(0)$, and hence, is omitted for brevity. To prove part (c), one can apply the same reasoning (as in the above proof) and infer from \eqref{V2} that $dk_1\int_{0}^{\beta} \Omega(\tau) d\tau \leq V_2(0)$ for all $t \geq 0$ in $\mathcal{Z}_r$. Exploiting the property (P2) of the design function $\Omega$, it can be concluded that	$\alpha_1(|\beta(t)|) \leq \int_{0}^{\beta} \Omega(\tau) d\tau \leq {V_2(0)}/{dk_1}$ for some class $\mathcal{K}_{\infty}$ function $\alpha_1$ for all $t \geq 0$ in $\mathcal{Z}_r$. Alternatively,  $\alpha_1(|\beta(t)|) \leq {V_2(0)}/{dk_1} \implies |\beta(t)| \leq \alpha^{-1}_1({V_2(0)}/{dk_1})$ for all $t \geq 0$ in $\mathcal{Z}_r$. Further, to prove part (d), replacing $\beta(t) = -\kappa z(t) + y(t)$ in the preceding inequality, implies $|-\kappa z(t) + y(t)| \leq \alpha^{-1}_1({V_2(0)}/{dk_1})$. Since $\big{|} |\kappa z| - |y| \big{|} \leq |-\kappa z + y|$, it follows from the former inequality that $\big{|} |\kappa z(t)| - |y(t)| \big{|} \leq  \alpha^{-1}_1({V_2(0)}/{dk_1})$, which implies that $\frac{1}{\kappa}\left[|y(t)| - \alpha^{-1}_1\left(\frac{V_2(0)}{dk_1}\right)\right] \leq |z(t)| \leq \frac{1}{\kappa}\left[|y(t)| + \alpha^{-1}_1\left(\frac{V_2(0)}{dk_1}\right)\right]$ for all $t \geq 0$ in $\mathcal{Z}_r$. Since $y = (1/2)r^2$, it follows from \eqref{range_bound_2} that $\frac{1}{2}(r_i + \delta_a(1 - \Upsilon_2))^2 \leq |y(t)| \leq \frac{1}{2}(r_o - \delta_b(1 - \Upsilon_2))^2$ for all $t \geq 0$ in $\mathcal{Z}_r$. Consequently, the result follows after applying the preceding bounds on $|y(t)|$. This completes the proof.  
\end{proof}

\begin{remark}[Boundedness of proposed controllers]
Note that the controls \eqref{control_1} and \eqref{control_2} remain bounded along any solution trajectory within the set $\mathcal{Z}_r$, because of the following: (i) the term $\eta(r)$, given by \eqref{eta}, is finite since $-\delta_a < e_r(t) < \delta_b$ for all $t \geq 0$ in $\mathcal{Z}_r$, as established in Theorem~\ref{thm_control_design_1} and Theorem~\ref{thm_control_design_2}, (ii) the design function $\Omega$ remains finite due to its bounded argument for all $t \geq 0$ in $\mathcal{Z}_r$ in both the cases: (a) For controller \eqref{control_1}, the boundedness of $\Omega(r\dot{r})$ follows from the fact that $\bar{e}_x$ is bounded from \eqref{error_bound_1}, and so is $r\dot{r}$, using \eqref{range_dynamics}. (b) For controller \eqref{control_2}, the boundedness of $\Omega(-\kappa z + y) = \Omega(\beta)$ follows from \eqref{beta_bound}. 
\end{remark}

\begin{figure}[t!]
	\centering{
		\subfigure[Khepera IV ground robot]{\includegraphics[width=0.22\textwidth]{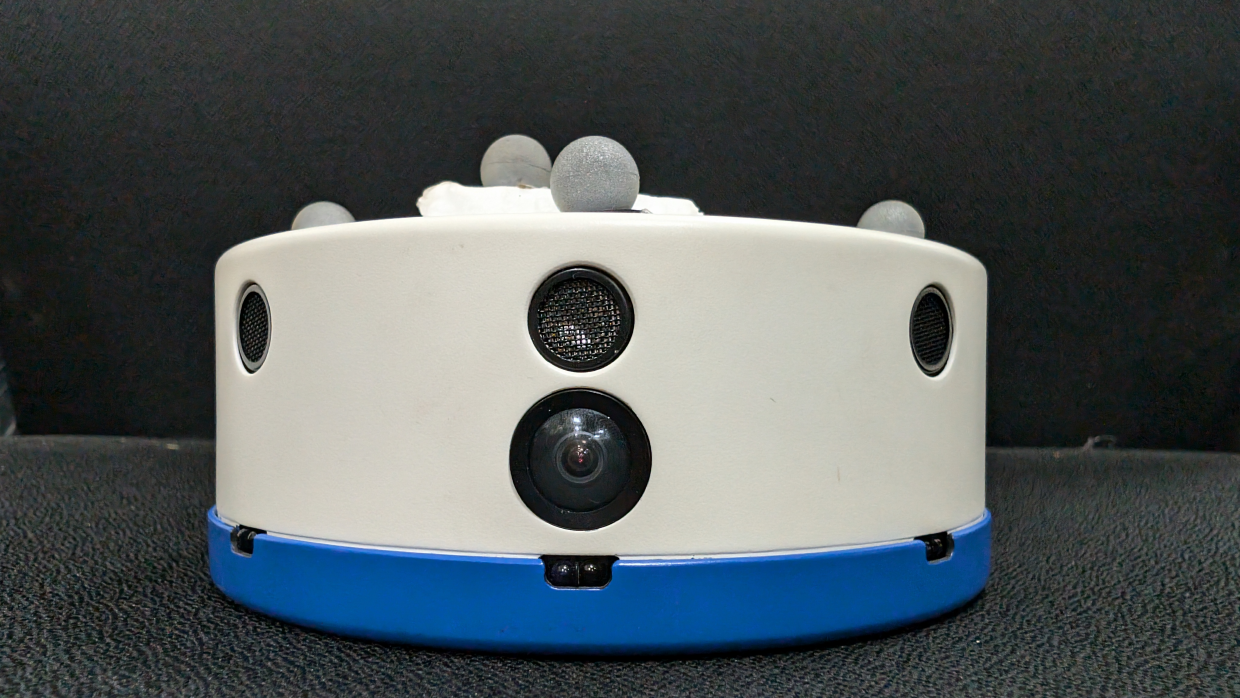}\label{khepera}} \hspace*{0.1cm}
		\subfigure[Motion capture cameras]{\includegraphics[width=0.22\textwidth]{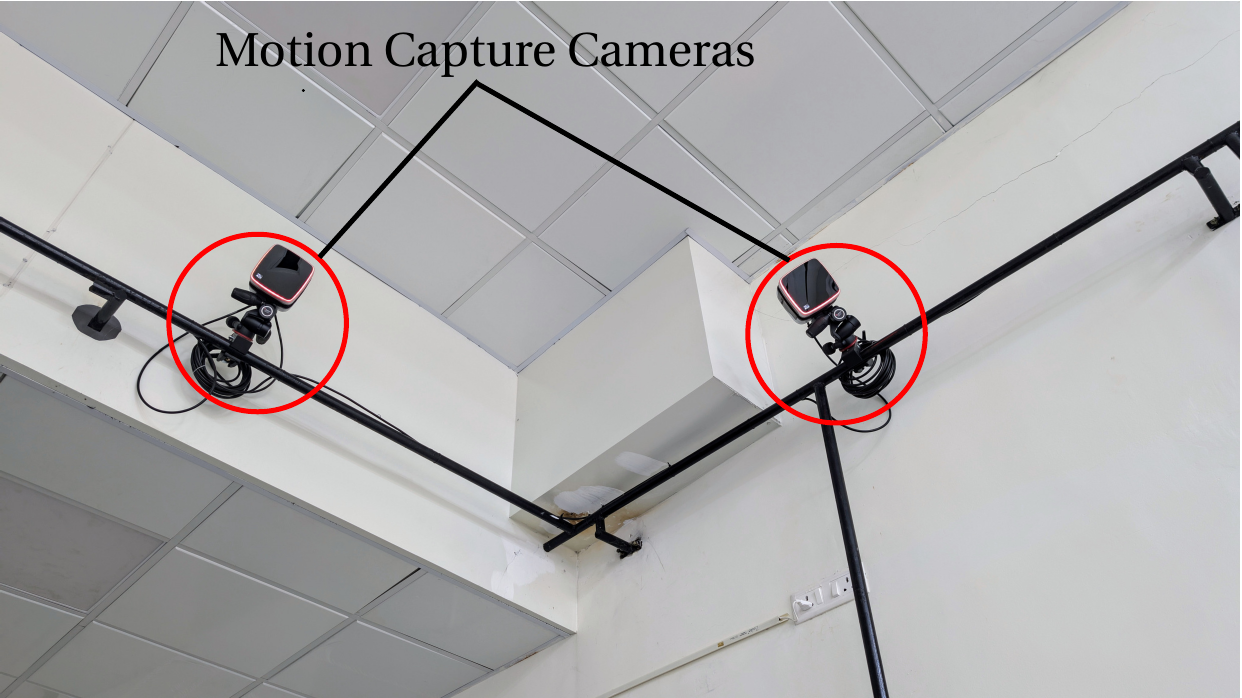}\label{cameras}}
		\caption{Khepera IV robot and MoCap validation setup.}
		\label{lab}}
		\vspace*{-15pt}
\end{figure}

\begin{figure*}[t!]
	\centering{
		\subfigure[Trajectory at $t=26$ (s)]{\includegraphics[width = 0.185\textwidth]{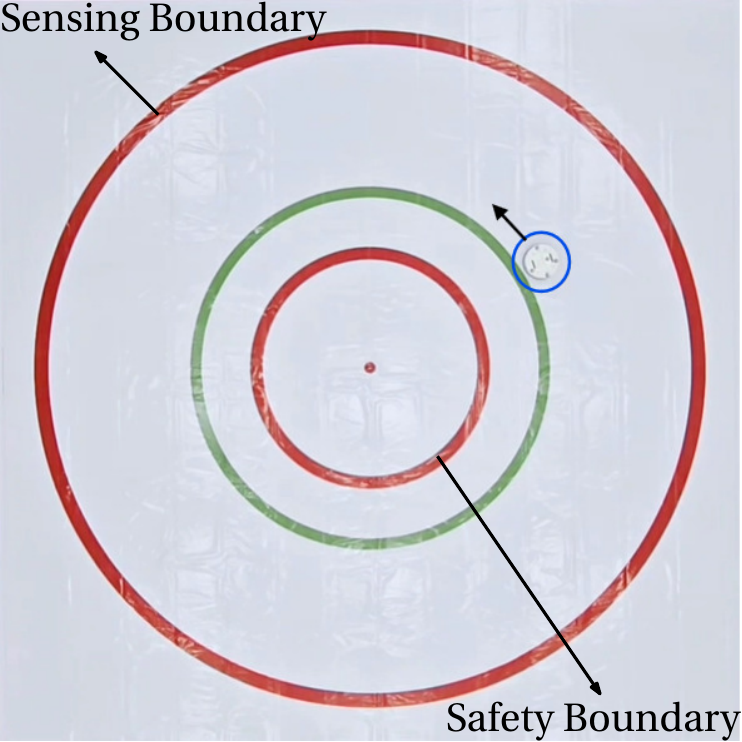}
			\label{topview_worr}} \hspace*{0.35cm}
		\subfigure[From task manager]{\includegraphics[width = 0.192\textwidth]{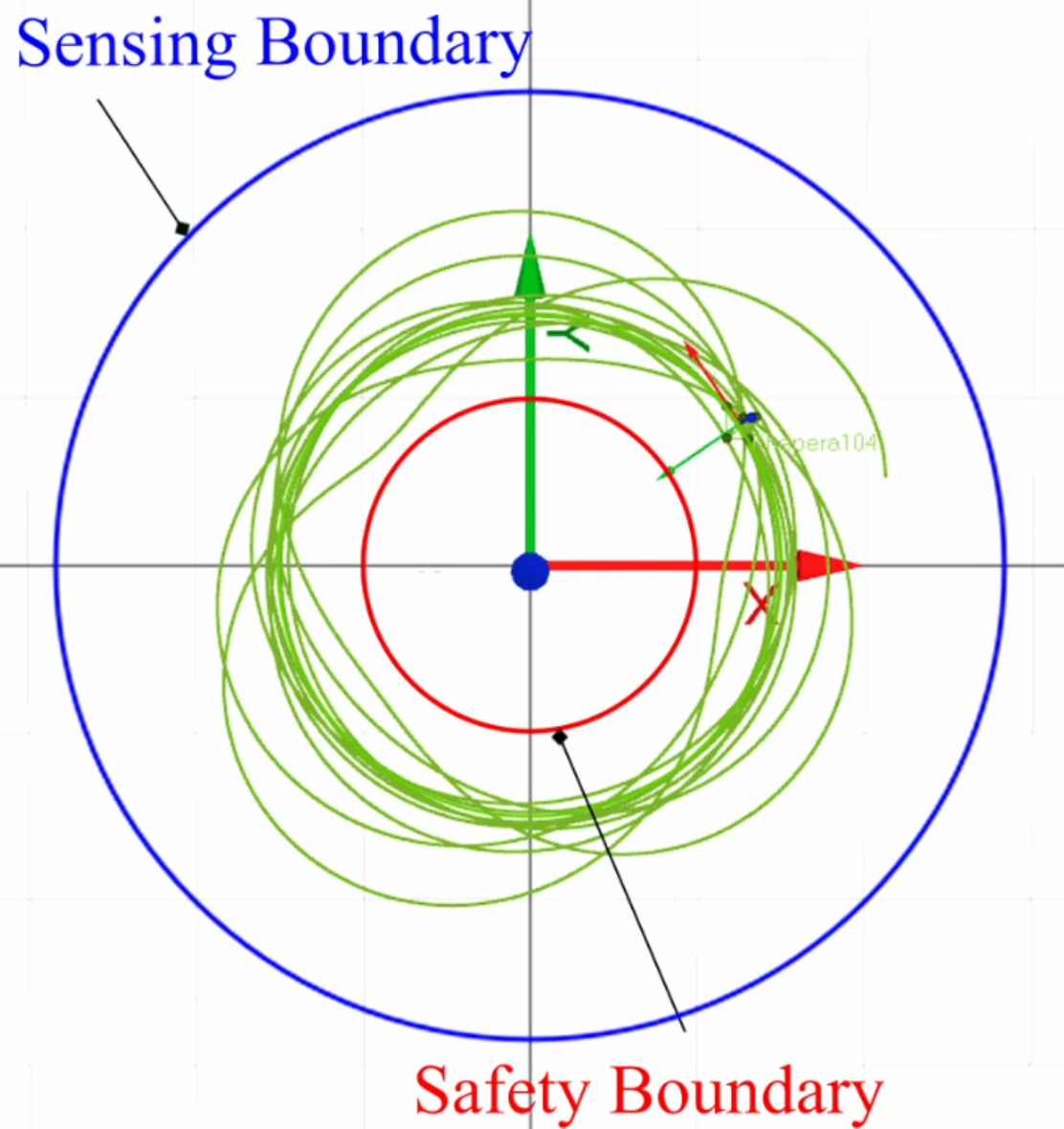}
			\label{qtm_experi_worr}}\hspace*{0.15cm}
		\subfigure[Range]{\includegraphics[width=0.27\textwidth]{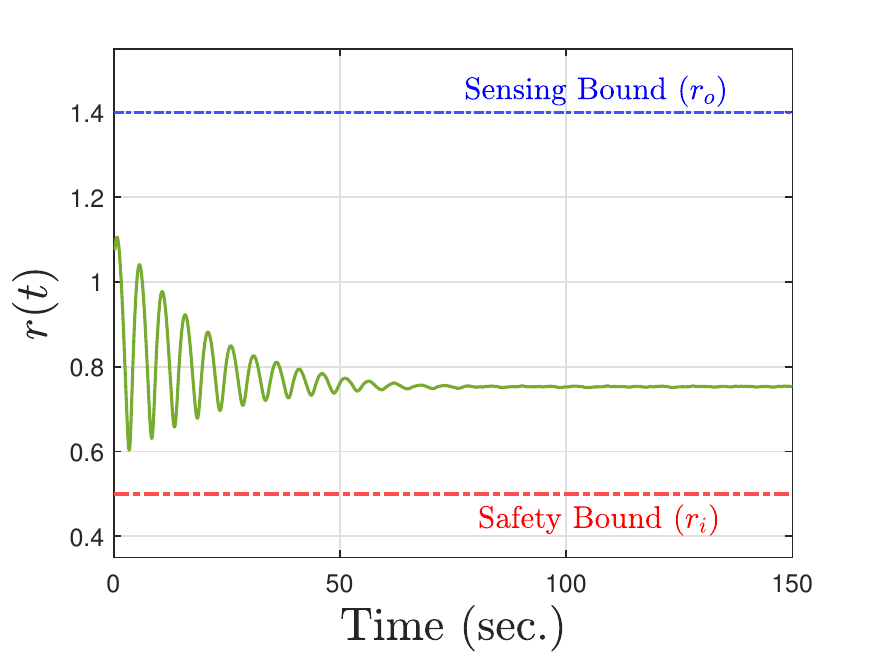}
			\label{experi_radius_worr}}\hspace*{-0.1cm}
		\subfigure[Control law]{\includegraphics[width=0.27\textwidth]{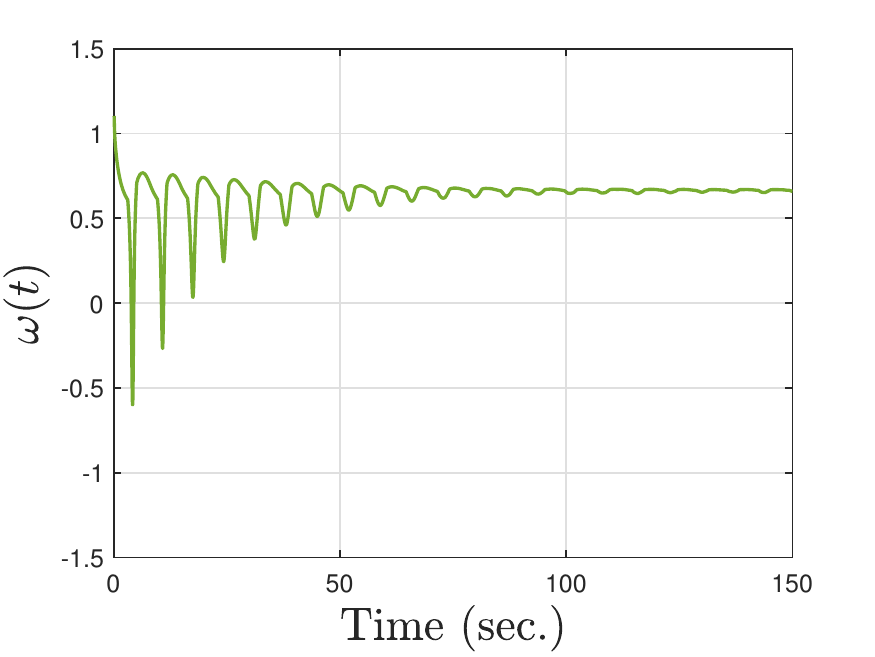}
			\label{experi_control_worr}}
		\caption{Experimental results for Khepera IV robot using range only measurements.}	
		\label{experi_range_only}}
		\vspace*{-15pt}
\end{figure*}

\section{Simulation and Experimental Results}
In principle, our approach is applicable even if the robot begins arbitrarily close to the two boundaries, provided Assumption~\ref{assumption} holds. However, in practice, a real robotic system must be initialized with a sufficient safety margin to account for its physical structure. We illustrate both these aspects through simulations and experiments as follows: 

\subsection{Simulations}
Consider a robot moving at a constant speed of $v = 0.5$ (m/s) with an initial position and heading angle of $[x(0), y(0)]^\top = [1.3, 0.25]^\top$ (m) and $\theta(0) = 25^\circ$, respectively. Let the robot be deployed to circumnavigate the target $T$, located at $[x_T, y_T]^\top = [0, 0]^\top$, at a desired radius $d = 0.75$ (m), while satisfying safety and sensing constraints with $r_i = 0.5$ (m) and $r_o = 1.4$ (m), respectively. Clearly, Assumption~\ref{assumption} is satisfied, since $r_i = 0.5 < r(0) = 1.3238 < r_o = 1.4$. Using \eqref{barriers}, the barriers are obtained as $\delta_a = d - r_i = 0.25$ and $\delta_b = r_o-d = 0.65$. Assuming a linear design function $\Omega(\tau) = \tau$ for simplicity, we simulated the range-only controller \eqref{control_2} with control gains $k_1 = 14, \ k_2 = 0.5, \ \kappa = 7$, initializing the auxiliary variable as $z(0) = 0$ (the results for control \eqref{control_1} are omitted for brevity). The results are plotted in Fig.~\ref{Range_only} where Fig.~\ref{traj_range_only} shows that the robot circumnavigates the target at the desired distance and successfully satisfies both the safety and sensing constraints. Fig.~\ref{error_range_only} and Fig.~\ref{distance_range_only} depict the time evolution of the radial error $e_r(t)$ and the range $r(t)$, respectively. These satisfy their respective bounds as obtained in Theorem~\ref{thm_control_design_2} and eventually approach their desired values. Fig.~\ref{control_range_only} plots the time evolution of the controller \eqref{control_2a} which approaches the steady state value of $v/d \approx 0.67$ (rad/s), as desired.

\subsection{Experiments}
We performed experiments using a \emph{Khepera IV} differential-drive ground robot (see Fig.~\ref{khepera}) with the help of a motion capture (MoCap) validation setup comprising overhead cameras (see Fig.~\ref{cameras}). The MoCap system is operated via a task manager where the data is recorded and processed for feedback. We considered that the Khepera IV begins with the initial position and heading angle of $[x(0), y(0)]^\top = [1.05, 0.20]^\top$ and $\theta(0) = 90^\circ$, respectively (satisfying Assumption~\ref{assumption} for the same boundary constraints as in simulation subsection). Since Khepera is a differential drive robot, we obtained the speeds $v_\ell$ and $v_r$ of its left and right vehicles from the linear and angular speeds of the (unicycle) robot \eqref{system_model} as follows: $v_r = v + (d_w/2)\omega, \ v_\ell = v - (d_w/2)\omega$ where $d_w = 10.54$ (cm) is the distance between the two wheels of Khepera robot. The robot was operated at a constant linear speed $v = 0.5$ (m/s), following the hardware limit of maximum wheel speed of $0.814$ (m/s). Upon implementing the controller \eqref{control_2} with control gains $k_1 = 3, \ k_2 = 1, \ \kappa = 4$, we captured the experimental results as shown in Fig.~\ref{experi_range_only}. Fig.~\ref{topview_worr} provides a top-down view of the robot's motion, highlighting its position at $t = 26$ seconds. The robot's path, recorded throughout the experiment from the task manager, is illustrated in Fig.~\ref{qtm_experi_worr}. Further, Fig.~\ref{experi_radius_worr} and Fig.~\ref{experi_control_worr} display the range and the performance of the controller \eqref{control_2a}, respectively. The video of the conducted experiments can be found on \url{https://youtu.be/XoNbWddjNvY}.  


\section{Conclusion and Future Remarks}\label{Conclusion}
Using the concepts of the range-based dynamic output feedback controller and ABLF, this paper investigated the problem of safe and secure circumnavigation of a hostile target by a unicycle robot. The proposed controller relied exclusively on range measurements and ensured that the robot's trajectories remained bounded within a predefined annular region around the target by appropriately selecting the barriers $\delta_a$ and $\delta_b$. The (local) asymptotic stability of the closed-loop system was rigorously established, and analytical bounds were obtained for various intermediate signals, including the proposed controllers. The core of the control design lied in effectively combining the two potential functions \eqref{Ve} and \eqref{Vr}. It is worth noticing that both these potentials minimize at the desired equilibrium where $r \to d$, however, none of these individually can guarantee both the objectives simultaneously, as: (i) \eqref{Ve} does not account for the desired safety and sensing constraints, (ii) \eqref{Vr} does fulfill both the stabilizing and constraints requirements, it alone cannot be used for control design, since its derivative \eqref{Vr_derivative_mid} depends on $\dot{r}$ which is not (directly) controllable via $\omega$ (see \eqref{r_square}). Hence, both potentials are necessary. 

There is a wide scope for future research in this direction, including the consideration of moving targets, the presence of arbitrary-shaped boundary constraints, an extension of the problem to the 3D scenario, and multi-agent safe and secure target circumnavigation problems.


\bibliographystyle{IEEEtran}
\bibliography{References_New}

\end{document}